\newcommand{\vast}{\bBigg@{14.030}}
\newcommand{\Vast}{\bBigg@{18.30}}
  \newcommand{\nano}{\@setfontsize\miniscule{3.5}{4.5}}
  \newcommand{\nano}{\@setfontsize\miniscule{4.5}{5.5}}%
  \newcommand{\nano}{\@setfontsize\miniscule{4.5}{5.5}}%
\newcommand{\balita}{\raisebox{1.9pt}{\text{\nano$\bullet$\hspace{.7pt}}}}
\newcommand{\inv}{^{-1}}
\newcommand{\Ccat}{\mathscr{C}}
\newcommand{\itemb}{\item[\balita]}
\newtheoremstyle{mytheoremstyle} 
    {10pt}                    
    {8pt}                    
    {\itshape}                   
    {}                           
    {\scshape}                   
    {.}                          
    {.5em}                       
    {}  
\newcommand{\leqnomode}{\tagsleft@true}
\newcommand{\reqnomode}{\tagsleft@false}
\theoremstyle{mytheoremstyle}
\newtheorem{theorem}{Theorem}
 \newtheorem{proposition}[theorem]{Proposition}
 \newtheoremstyle{definition} 
    {8pt}                    
    {5pt}                    
    {}                   
    {}                           
    {\scshape}                   
    {.}                          
    {.5em}                       
    {}  
 \theoremstyle{definition}
\newcommand{\Z}{\mathbb{Z}}
\newcommand{\C}{\mathbb{C}}
\newcommand{\R}{\mathbb{R}}
\newcommand{\dif}{{\mathrm{d}}}
\newcommand{\M}[1]{M_{#1}(\mathbb{C})}
\newcommand{\uni}{\mathrm{U}}
\newcommand{\ee}{\mathrm{e}}
\newcommand{\se}{_{s(e)}}
\newcommand{\te}{_{t(e)}}
\DeclareMathOperator{\End}{\mathrm{End}}
\DeclareMathOperator{\Ad}{Ad}
\DeclareMathOperator{\Tr}{Tr}
\DeclareMathOperator{\tr}{tr}
\newcommand{\hp}[1]{^{(#1)}}
\def\[#1\]{%
  \begin{align}#1\end{align}%
}
 \title[Comment on `Gauge networks in NCG' ]{Comment on \\`Gauge networks in noncommutative geometry' \vspace{0ex}}
 \author[C. I. Perez-Sanchez]{Carlos P\'erez-S\'anchez}
 \address{Heidelberg University, Institute for Theoretical Physics,\newline \indent
  Philosophenweg 19, 69120 Heidelberg, Germany
 }
\email{\vspace{-2ex}\href{mailto:perez.sanchez@protonmail.ch}{perez@thphys.uni-heidelberg.de, perez.sanchez@protonmail.ch}}
\newcommand*\notocchapter[1]{%
  \if@openright\cleardoublepage\else\clearpage\fi
  \thispagestyle{empty}\global\@topnum\z@
  \@afterindenttrue
  \let\@secnumber\@empty
  \@makeschapterhead{#1}\@afterheading
}
\begin{document}

  \begin{abstract}
The article  (Gauge networks in noncommutative geometry, \textit{J. Geom. Phys.} 75 : 71--91, 2014, \cite{MvS}) that motivates this comment
provides, in particular, one answer to the following
natural question: what is noncommutative geometry on a lattice?
In the context of spectral triples,
Marcolli and van Suijlekom define in \textit{op. cit.} a Dirac operator on the lattice
and identify the corresponding Spectral Action
with the lattice Yang-Mills--Higgs system.
In this comment we show that the continuum limit of this theory is the Yang-Mills action functional,  without a Higgs scalar.
\\

\noindent
 \textsc{R\'esum\'e.}   Qu'est-ce que la g\'eom\'etrie non-commutative sur r\'eseau\,? \,\`A cette question  l'article ici comment\'e   (Gauge networks in noncommutative geometry, \textit{J. Geom. Phys.} 75 : 71--91, 2014)  apporte une des réponses possibles. Marcolli et van Suijlekom, travaillant dans le contexte des triplets spectraux, y construisent un opérateur de type Dirac pour le réseau et dérivent une théorie sur  réseau de type Yang-Mills--Higgs  à partir de l'Action Spectrale.  Ce commentaire montre que la limite  continue   de ce mod\`ele est la théorie pure de Yang-Mills  (sans aucun Higgs).
\end{abstract}

\maketitle
\vspace{-2ex}
\section{Introduction and notation}

Lattice gauge field theory \cite{montvay_muenster_1994} exists for more than half-century
\cite{Wilson} and  does not cease to inspire \cite{Cao:2023uqm}
 mathematicians and physicists.
 On the continuum, gauge theories coupled to a Higgs scalar
have, on the other hand, an interesting geometrical interpretation in terms of certain spectral triples (see Sec. \ref{sec:STr} for a definition)
referred to as almost-commutative geometries   \cite{Chamseddine:2006ep, Barrett:2006qq,vanSuijlekom:2024jvw}.
It is natural to ask whether the Spectral Action \cite{Chamseddine:1996zu} ---
a functional that describes the dynamics of
noncommutative geometries (thus the physics of almost-commutative geometries) --- can yield a lattice gauge-Higgs theory.\\

Since we will work below with a discrete setting (for which we shall provide formal definitions), we
only need the heuristics of almost-commutative geometries $M \times F$.
They have a manifold factor $M$ (which leads to an infinite-dimensional
commutative algebra) and a noncommutative factor $F$
modelled on a finite-dimensional
matrix algebra, which being  much smaller than $C(M)$, justifies calling  $M\times F$
`almost-commutative'.
In \cite{MvS} almost-commutative geometries are formulated
in a discrete setting by using,
instead of an (even-dimensional) spin manifold $M$,
an embedded directed graph $\Gamma \subset M$. In \textit{op. cit.}
a gauge-Higgs functional on $\Gamma$ emerges from the Spectral Action in
the way described in Section \ref{sec:SA_siec}.
This comment proves that the lattice gauge-Higgs functional of
\cite[Thm. 28]{MvS}, see Eq. \eqref{trzydziesci} below, has the continuum limit of a pure gauge theory.
\\

\textsc{Notation.} To ease the comparison with the main source,
the equation \textit{numbers} in this comment match those of \cite{MvS}.
Whenever new equations need labels  here, we use  letters instead.


\section{Spectral triples on graphs} \label{sec:STr}

According to \cite[Def. 2]{MvS},
a \textit{finite spectral triple} $(A,H,D)$ consists of
the following data: a finite-dimensional (unital) C${}^*$-algebra $A$,
an inner product space $H$ together with a $*$-action $\lambda$
of $A$ on $H$, i.e. such that $\lambda(a^*) $
is the adjoint  $\lambda(a)^*$  of $\lambda(a)$ for each $a\in A$; and finally, $D:H\to H$, which is a symmetric linear operator (the Dirac operator).
\\

\textit{In at least one manner}, finite spectral triples can form a category.
We denote by $\mathscr C$ the category of finite spectral triples
with the morphism structure given in \cite[Def. 2]{MvS}, which we now reproduce.
Given  two spectral triples, $X=(A,H,D)$ and  $X'=(A',H',D')$,
the set $\hom_\Ccat (X,X')$ of morphisms
from $X$ to $X'$ consists  of pairs $(\phi,L)$ of unital $*$-algebra maps $\phi: A \to A'$
along with unitary maps $L:H\to H'$. As part of this definition,
$\phi$, $D$ and $L$ obey two compatibility conditions: first, for all $a\in A$,
\[ \phi(a) = L a L^*,\]

(or
more precisely  $\lambda' \circ \phi = \Ad_L\circ  \lambda $, where $\lambda$
and $\lambda'$ are, respectively,
  the actions of $A$ on $H$ and of $A'$ on $H'$
) and, secondly,
\[ D'= L D L^*, \]
 which is crucial for the argument of the present comment.
\\


Let $\Gamma\hp0$ and $\Gamma\hp 1$ denote vertices and edges, respectively, of a directed graph $\Gamma$ that is embedded in $M$. For fixed $N\in\Z_{\geq 2}$, instead of the ordinary algebra $C(M)\otimes M_N(\C)$
of the archetypical almost-commutative geometry that yields $\uni(N)$-Yang-Mills theory,
the algebra of the discrete geometry is, according to \cite[Sec. 4]{MvS},
\[ \notag M_N(\C)^{\Gamma\hp 0}=\bigoplus_{v\in \Gamma\hp0} M_N(\C) .\]

This algebra naturally acts  on the second factor of
the Hilbert space
$\mathscr S \otimes (\C^N)^{\Gamma\hp 0 } $,
being $\mathscr S$  the fiber of the spinor bundle of $M$.
(The space
$\mathscr S \otimes (\C^N)^{\Gamma\hp 0 } $ is the
discrete version of the $L^2$-space of spinors on $M$.)
In order to construct a Dirac operator that fills the third slot in
\[ \notag ( M_N(\C)^{\Gamma\hp 0}, \mathscr S \otimes (\C^N)^{\Gamma\hp 0 }, \,\, \balita\,\,)\]

and that turns the pair into a spectral triple%
, we recall some notation
 and results of \cite[Sec. 2]{MvS} next.

From now on $M=\R^4$ and $\Gamma=(\Gamma\hp0,\Gamma\hp 1) $  will
be the four-dimensional square lattice. Let us denote its lattice spacing by $l>0$, so vertices are $\Gamma\hp 0=(l\mathbb  Z)^4 $ and edges
$\Gamma\hp 1$
consist of nearest neighbours pointing in `positive direction',
i.e. any $e\in \Gamma\hp 1$ is so directed, that\footnote{Since a sporadic
reference to quiver representations is made below, we remark that our notation
is
$e=(s(e),t(e))$,
so $s$ means `source' of the arrow or oriented edge $e$, and $t$ means its `target',
not its `tail', as elsewhere in the  quiver representations literature, e.g. in
\cite{bookQuivRep}.}
\[t(e) -s(e) \in \{ (l,0,0,0), (0,l,0,0),(0,0,l,0), (0,0,0,l)\}.  \tag{A}\label{edgeslist}
\]


To define the Dirac operator (henceforth $\mathscr S\cong \C^4$)
\[D_{\Gamma,L}: \mathscr S \otimes (\C^N)^{\Gamma\hp 0 }\to  \mathscr S \otimes (\C^N)^{\Gamma\hp 0 }  \notag\]

the authors \textit{choose} a representation of the lattice $\Gamma$ in the category $\Ccat$, which remains fixed.
This is to be interpreted in the spirit of quiver $\Ccat$-representations, namely as a functor from the path category of $\Gamma$ to $\Ccat$.
Spelled out, a $\Ccat$-representation of $\Gamma$ is an association $X:\Gamma\hp0 \to \Ccat$ of a finite spectral triple $X_v=(A_v,H_v,D_v)$ to each vertex $v$  of
$\Gamma$, as well as of a $\Ccat $-morphism $(\phi_e,L_e)$ to each edge $e=(s(e),t(e))= s(e) \stackrel{e}{\longrightarrow} t(e) $ of $\Gamma$, organised as  \[ \big \{ (\phi_e,L_e)  \in \hom_\Ccat ( X\se,X\te) \big\}_{e\in \Gamma\hp1} .  \notag \]


Seen as an $\End (\mathscr S\otimes \C^N)  $-valued square matrix $D_{\Gamma,L} \in M_{|\Gamma\hp 0| } \big( \End (\mathscr S\otimes \C^N)  \big) $ whose entries are indexed by the vertices of $\Gamma$, the Dirac operator $  D_{\Gamma,L} $
has an off-diagonal piece $D^{\text{\tiny \textsc{ym}}}(L)$
and a diagonal piece.
More precisely, if $\gamma$ is the  chirality or $\mathbb Z_2$-grading on $\mathscr S$ (usually denoted $\gamma_5$),
\[ \notag (D_{\Gamma,L})_{v,w} = [ D^{\text{\tiny \textsc{ym}}}(L)  ]_{v,w} + \delta_{v,w} \gamma \otimes D_v , \qquad  (v,w\in \Gamma\hp 0) \]

where $\delta_{v,w}=0$ if $v\neq w$ and $\delta_{v,v}=1$ for all $v$. The piece $ D^{\text{\tiny \textsc{ym}}}(L)$
is defined in terms of parallel transports (presupposing a spin connection that in this comment does not play any role), gamma matrices along the edges,
and depends also the data $L_e$ associated
to each edge $e$ by the chosen $\Ccat$-representation of $\Gamma$. The exact form of $ D^{\text{\tiny \textsc{ym}}}(L)$ is here irrelevant;
we only stress that it is thanks to this piece that  the  Spectral Action (see below)
yields Wilson's $\uni(N)$-Yang-Mills lattice action\footnote{The exact
form for $ W(L)$ is given as a sum over
plaquettes $p$, i.e. length-4 non-backtracking loops $p=e_4e_3\bar e_2\bar e_1$ (with the bar denoting the reverse edge), as
 \[ W(L)  =-  \sum_{p} \tr \big( L_{ e_4}^* L_{ e_3}^*L_{e_2} L_{e_1}  + L_{ e_1}^* L_{ e_2}^*L_{e_3} L_{e_4}  \big) \notag .\]
}, $ W(L) $, while crossed products of the diagonal piece
($\gamma\otimes D$) with $D^{\text{\tiny \textsc{ym}}}(L)$ yield
the lattice gauge-Higgs coupling terms.

\section{The Spectral Action on the lattice} \label{sec:SA_siec}

The Spectral Action used in \cite{MvS}  depends on the lattice constant $l$ and on
the previous Dirac operator $D_{\Gamma,L}$ and reads\footnote{Readers familiar the
Spectral Action  in the form $\Tr f(D/\Lambda)$
will identify the energy scale $\Lambda=1/l$ and observe that the function
$f$ here is a monomial, $f(x)=x^4$, in contrast to the bump function used by
Connes-Chamseddine \cite{Chamseddine:1996zu}.}
\[S(D_{\Gamma,L})= l^4 \Tr [  (D_{\Gamma,L})^4 ], \notag  \]

where $\Tr$ is the trace of $M_{|\Gamma\hp 0| } \big[ \End (\mathscr S\otimes \C^N)  \big] $.
We do not need further details, since according to \cite[Thm. 28]{MvS},\
up to an additive constant:
\[ \label{trzydziesci}  \tag{13}
S(D_{\Gamma,L})& = W(L)  \\[1ex]  &  +
4 l^4 \sum_{v \in \Gamma\hp0 } \tr D_v^4  + 16l^2 \sum_{e \in \Gamma\hp 1}  \tr \big( D^2\se + D^2\te    -
 L_e^* D\te L_e D\se \big)\,,\notag
\]

\noindent
where $\tr$ is the trace
of $\M N$. It is written in lowercase
to distinguish it from the trace  that appears in the Spectral Action above,
but it is unnormalised ($\tr 1=N$).

Let $\epsilon_\mu$ be the $\mu$-th standard basis vector of $\mathbb Z^4$.
The unitary matrix $L_{e} $ corresponding to
an edge  $e=(v,v+l \epsilon_\mu)$  can be expressed as $ L_e=\ee^{i l A_\mu(v) }$ ($l\to0$)
for some Hermitian matrix $A_\mu(v) \in \M N$, which in the continuum
  becomes an ordinary Yang-Mills connection. If $F_{\mu\nu}$ is its curvature,
according to \cite[Prop. 29]{MvS},
\[ \notag
W(L) \to
\int_{\R^4}  \tr \big(  F_{\mu\nu} F^{\mu\nu} )  \qquad ({l\to 0}),
\]

(here and below, repeated Greek indices are implicit sums \`a la Einstein).

Similarly, \cite[Prop. 29]{MvS}  claims that the collection of operators  $\{D_v\}_{v\in\Gamma\hp0}$  gives rise to a Hermitian Higgs field $\Phi(x)$,  $x\in \R^4$, in terms of which
the Higgs potential and Higgs-gauge coupling are derived. More precisely, that
the second  line of Eq. \eqref{trzydziesci} yields
$ \int _{\R^4}   ( \mathcal L^{\textrm{H}} + \mathcal L^{\textrm{gH}}) \dif ^4x $
 in the limit $ l\to 0$,
where (below square brackets are commutator):
\[ \notag
\mathcal L^{\textrm{gH}} & = 8 \tr \big\{  \big( \partial_\mu \Phi -[i A_\mu, \Phi] \big) \big( \partial^\mu \Phi -[i A^\mu, \Phi] \big)  \big\} \,,   \\
\mathcal L^{\textrm{H}} & =
32 l^{-2 } \tr \Phi^2
+ 4  \tr \Phi^4   \,. \notag
\]

These two Lagrangians were identified in   \cite[Prop. 29]{MvS} --- and based on it, also in  \cite[Sec. 9.4.4]{Marcolli:2018uea} --- as the
gauge-Higgs and Higgs potentials, respectively.

\section{Our main result}

Now we proceed with the main aim of this comment.

\begin{proposition} Up to an additive constant,
 the continuum limit of Eq. \eqref{trzydziesci},
 and therefore that of the
Spectral Action of \cite[Sec. 4]{MvS},  is pure Yang-Mills, $\int_{\R^4} \tr (F_{\mu\nu}F^{\mu\nu})$.
\end{proposition}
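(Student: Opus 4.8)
The plan is to exploit the second compatibility axiom $D'=LDL^*$ that, by \cite[Def. 2]{MvS}, every morphism of $\Ccat$ must obey, and which the chosen $\Ccat$-representation of $\Gamma$ therefore imposes on each edge: $D\te = L_e D\se L_e^*$, with $L_e$ unitary. First I would substitute this identity into the edge summand in the second line of Eq.~\eqref{trzydziesci}. Cyclicity of $\tr$ together with $L_e^* L_e = 1$ give at once $\tr D^2\te = \tr D^2\se$ and $L_e^* D\te L_e = D\se$, whence
\[ \tr\big( D^2\se + D^2\te - L_e^* D\te L_e D\se \big) = \tr D^2\se \notag \]
for every $e\in\Gamma\hp1$. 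Thus the would-be gauge--Higgs coupling $16 l^2\sum_{e}\tr(\,\cdot\,)$ collapses to $16 l^2 \sum_{e}\tr D^2\se$ and, since by \eqref{edgeslist} each vertex of $(l\Z)^4$ is the source of exactly four edges, to $64 l^2 \sum_{v\in\Gamma\hp0}\tr D_v^2$ --- an expression with no dependence whatsoever on the edge data $\{L_e\}$, i.e. on the Yang--Mills connection $A_\mu$.

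Next I would note that the two remaining vertex sums in \eqref{trzydziesci}, namely $4l^4\sum_v\tr D_v^4$ and the $64 l^2\sum_v\tr D_v^2$ just produced, are constants. Indeed $D\te = L_e D\se L_e^*$ forces $\tr D\te^k = \tr D\se^k$ for all $k$, and as the (underlying graph of the) lattice $\Gamma$ is connected this makes $\tr D_v^k$ independent of $v$; writing $c_k$ for the common value, the entire second line of \eqref{trzydziesci} equals $(4 l^4 c_4 + 64 l^2 c_2)\,|\Gamma\hp0|$, a number fixed once and for all by the chosen $\Ccat$-representation and the spacing $l$, carrying no field content. In the continuum bookkeeping of \cite[Prop. 29]{MvS} this is exactly the identity $\partial_\mu\Phi - [iA_\mu,\Phi]=0$ for the emergent Higgs field $\Phi$ (hence $\mathcal L^{\mathrm{gH}}\equiv 0$) together with the spatial constancy of $\tr\Phi^2$ and $\tr\Phi^4$ (hence $\int\mathcal L^{\mathrm{H}}$ is field-independent). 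Feeding in the part of \cite[Prop. 29]{MvS} that the argument leaves untouched --- the off-diagonal piece, for which $W(L)\to\int_{\R^4}\tr(F_{\mu\nu}F^{\mu\nu})$ --- then yields the claim: up to the additive constant already present in \eqref{trzydziesci} plus the (now manifestly field-independent) Higgs-sector contribution, the continuum limit of $S(D_{\Gamma,L})$ is $\int_{\R^4}\tr(F_{\mu\nu}F^{\mu\nu})$.

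The computation is a single line of trace algebra, so there is no analytic obstacle. The hard part will be conceptual: to make transparent that invoking $D'=LDL^*$ is not an extra assumption but part of the definition of the lattice Dirac operator $D_{\Gamma,L}$ in \cite{MvS}, and hence that the result is compatible with \cite[Prop. 29]{MvS} --- the Higgs field extracted there is genuine, but the morphism axiom makes it covariantly constant and so non-dynamical, placing it rightfully inside the additive constant rather than in the action.
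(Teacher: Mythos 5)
Your proposal is correct and follows essentially the same route as the paper: it invokes the morphism axiom $D\te = L_e D\se L_e^*$ from the chosen $\Ccat$-representation, uses trace cyclicity and the four outgoing edges per vertex to collapse the second line of Eq.~\eqref{trzydziesci}, and then deduces vertex-independence of $\tr D_v^k$ (the paper does this via explicit directed paths from $x\wedge y$, you via per-edge trace equality and connectedness, which is the same idea). Nothing essential is missing.
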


\begin{proof} Instead of looking at the Lagrangians
$
\mathcal L^{\textrm{H}} $ and $ \mathcal L^{\textrm{gH}}$ given above,
we examine in detail their origin, namely Eq. \eqref{trzydziesci}. Since $W(L)$ yields already the pure
Yang-Mills action in the limit $l\to0$, it remains to prove that
the terms  $ \tr D_v^4 $, $  \tr D\se ^2$, $ \tr D\te^2$  and $ \tr ( L_e^* D\te L_e D\se ) $ that are
summed in the second line of Eq. \eqref{trzydziesci},
actually do not depend on the vertices $v$ nor on the edges $e$.
This is so because

 \begin{itemize}
  \itemb the family $\{D_v: H_v \to H_v \}_{v\in \Gamma \hp 0}$  of Dirac operators and \vspace{.4ex}
  \itemb the family   $\{L_e: H_{s(e)} \to H_{t(e)} \}_{e\in \Gamma \hp 1}$ of unitary matrices,
 \end{itemize}
 are not independent, but come from a $\Ccat$-representation of $\Gamma$ (see discussion just below
 Eq. \eqref{edgeslist} above)
 and are therefore  interwoven  by \cite[Eq. 2]{MvS}. Indeed, applying
 this equation to each edge, one observes that
 \[  D_{t(e)}=  L_e D_{s(e)} L_{e}^*  \qquad
  \text{for all } e \in \Gamma\hp 1,
 \notag \]
 \noindent
which, together with the cyclicity of the trace, means that
 \[    \tr \big(   D^2\te   -
 L_e^* D\te L_e D\se  \big) =
\tr \big[   D^2\te    -
   D\te  (L_e D\se L_e^*)  \big] =  0\notag
 \]
 \noindent
for all $e$. Due to this relation,
in the second line of Eq. \eqref{trzydziesci},  some terms cancel out as follows:
 \[ \notag &\,\,
 4 l^4 \sum_{v \in \Gamma\hp0 } \tr D_v^4  + 16l^2 \sum_{e \in \Gamma\hp 1}  \tr \big( D^2\se + D^2\te    -
 L_e^* D\te L_e D\se \big)  \\  =&\,\,
4 l^4 \sum_{v \in \Gamma\hp0 } \tr D_v^4  + 16l^2 \sum_{e \in \Gamma\hp 1}  \tr \big( D^2\se \big)  \label{B} \tag{B}  \\
=&\,\,
 \sum_{v \in \Gamma\hp0 }\tr  \big(  4 l^4  D_v^4  + 64  l^2    D_v^2 \big) ,
\notag \]

\noindent
where, in order to obtain the last equation, the fact that there are four
outgoing edges per vertex is used   (i.e. that for each $v\in\Gamma \hp0$,
the preimages of $s$ satisfy $ |s\inv (v)|=4$).
\\

We now prove the vertex-independence of the summands in
Eq. \eqref{B}. Pick any two vertices $x,y \in  \Gamma \hp 0=(l\mathbb  Z)^4$ of the rectangular lattice
and let
$ x \wedge y \in (l \mathbb Z)^4$ be given by
\[x \wedge y: =\big(\hspace{-.3pt}\min\{x_1,y_1\},\min\{x_2,y_2\},\min\{x_3,y_3\},\min\{x_4,y_4\} \hspace{1.3pt}\big). \notag  \]

By construction, there exists at least two paths on $\Gamma$,
$\alpha_1$ and $\alpha_2$,
that
consist of edges of $\Gamma\hp 1$, i.e. obeying \eqref{edgeslist}, with start and end points  [$s(\alpha_i)$ and $t(\alpha_i)$, respectively] given by
 \[&&s(\alpha_1)&=x \wedge y,  && & s(\alpha_2) &=x \wedge y, && \nonumber \\
 &&t(\alpha_1)&=x,   && \hspace{-13ex}&
 t(\alpha_2)&=y. && \notag
\]

 For each $i=1,2$,
 let
 $L_{\alpha_i}$ denote the matrix associated to $\alpha_i$ by the
  representation of $\Gamma$.
  We recall that a $\Ccat$-representation of $\Gamma$ is a functor from the free category of $\Gamma$ to $\Ccat$. Hence, both
 $L_{\alpha_1}  $ and $L_{\alpha_2}$ are determined by functoriality  in terms of the ordered product of the $L$-matrices on the edges
 of $\Gamma$ that form each path
 $\alpha_i$ (e.g. if  $\beta= e_n\cdots  e_2 e_1$ is a path, i.e. an edge composition
 that is legal $s(e_i)=t(e_{i-1}), i=2,\ldots,n$, then  $L_\beta = L_{e_n} \cdots L_{e_2} L_{e_1}$).  Therefore,
 $L_{\alpha_1}  $ and $L_{\alpha_2}$ are unitary, since
 so is $L_e$ for each $e\in\Gamma\hp1$. Finally, applying \cite[Eq. 2]{MvS}
 for all edges composing  $\alpha_1$ or $\alpha_2$ implies
 \[ D_{x} =L_{\alpha_1}  D_{x \wedge y} L_{\alpha_1}^* \quad  \text{ and } \quad D_{y} =L_{\alpha_2}  D_{x \wedge y} L_{\alpha_2}^* . \notag \]

Therefore, using the unitariness of each $L_{\alpha_i}$,
\[ \nonumber
\tr \big[ (D_{x})^m \big] & = \tr \big [
(L_{\alpha_1}  D_{x \wedge y} L_{\alpha_1}^*)^m
\big]  \\ \notag  &
= \nonumber
 \tr \big [
L_ {\alpha_1}  (D_{x \wedge y} )^m L_{\alpha_1}^*
\big]
  \\ &
= \tr \big [  ( D_{x \wedge y})^m \big] \qquad \qquad \quad (m \in \Z_{>0})
\nonumber
\\ & \nonumber
=
\tr \big [
L_{\alpha_2} (D_{x \wedge y} )^m L_{\alpha_2}^*
\big] \nonumber
  \\ &= \tr \big [
(L_{\alpha_2}  D_{x \wedge y} L_{\alpha_2}^*)^m
\big]
= \tr \big[ (D_{y})^m \big]\,. \nonumber
\]

\noindent
Hence, powers of the Dirac operator have  a constant trace, in the sense of the map
\[ \Gamma \hp 0 \ni v\mapsto \tr \big( D_{v}^m \big) \qquad\text { (for fixed $m$)} \notag  \]

\noindent
being a constant.  Therefore $ \tr  ( D_{v}^m   ) $  will escape from the sum $\sum_{v\in \Gamma\hp0}  \tr  ( D_{v}^m  ) = \tr  ( D_{v}^m  )  \times | \Gamma\hp0 | $ (see footnote
\ref{footn:sums} below) appearing in the  Spectral Action,
and therefore from the integral over $\R^4$ in the continuum limit.
\\

We conclude from this analysis, that
\textit{all} summands in Eq. \eqref{trzydziesci} in which $\{D_v\}_{v\in \Gamma\hp 0}$ occur [those in Eq. \eqref{B}] are non-dynamic, as they are
an additive  constant (i.e. independent from the point $x\in \R^4$).
Hence, those sums over vertices and edges contribute, in the best  case\footnote{\label{footn:sums}A sum like
$ \sum_{v\in  \Gamma\hp0} \tr D_v^m $ might converge, since
$\tr D_v^m$ is independent from $v$, but the value of $\tr D_v^m$ can be set (for all $v$)
to some function of $l$ and the number of vertices, in such a way that
convergence is obtained (for example, if such sum is taken in the sense
of the $n \to \infty$ limit of \[  \sum_{v\in \Gamma\hp0 _{< n}} \tr D_v^m
\text{ with }  \Gamma\hp0 _{< n} := \big\{ (x_1,x_2,x_3,x_3) \in (l\Z)^4 :  |l x_i| < n   \text{ for each } i \big\} \notag\]
and if $D_v$ is such that $\tr D_v^m $ is proportional to $\Big|\,
 \Gamma\hp0 _{< n} \, \Big|\inv  $,  which is not a function of $v$).},
a finite additive constant to the Spectral Action $l^4 \Tr [  (D_{\Gamma,L})^4 ]$, which, up to constants, reads then
$\int_{\mathbb R^4} F_{\sigma\rho} F^{\sigma\rho} $ ($l\to0$).
\end{proof}

\subsection{Closing remarks}

The constancy of the \textit{pure} Higgs potential was already sketched (with
other argument) in \cite{NCGquivers}. This motivated a lattice Yang-Mills--Higgs model
with a dynamic Higgs after:  (i) changing  Dirac operator, (ii) using an actual quiver
instead of $(l\Z)^4$, and (iii) replacing the target category. Such setting
is also derived from the Spectral Action.
But of course, another, more conservative model motivated by
\cite{MvS} is still worth exploring ---
all the more so, due to the remarkable feature
of the potential for $\Phi$, which contains a
 quadratic-quartic potential, despite the
 Spectral Action being purely quartic.

 {
\footnotesize
 \section*{Acknowledgements}
 This work was mainly supported by the Deutsche
Forschungsgemeinschaft (DFG, German Research Foundation) under
Germany’s Excellence Strategy EXC-2181/1-390900948 (the Heidelberg
\textsc{Structures} Cluster of Excellence) and at the beginning by the European Research
Council (ERC) under the European Union’s Horizon 2020 research and
innovation program (grant agreement No818066).


\end{document}